





\documentclass[sn-basic]{sn-jnl}


\usepackage{enumitem}
\usepackage{amsthm}
\usepackage{amsmath}
\usepackage{amssymb}
\usepackage{natbib}
\usepackage{graphicx}
\usepackage{multirow}
\usepackage{multicol}
\usepackage{comment}
\usepackage{float}




\theoremstyle{thmstyleone}%
\newtheorem{thm}{Theorem}[section]
%

\theoremstyle{plain}

\theoremstyle{thmstyletwo}%

\theoremstyle{thmstylethree}%
\numberwithin{equation}{section}


\begin{document}

\title[ ]{Post-model-selection prediction for GLM's}


\author[1]{\fnm{Dean} \sur{Dustin}}\email{ddustin8@huskers.unl.edu}

\author*[2]{\fnm{Bertrand} \sur{Clarke}}\email{bclarke3@unl.edu}

\affil[1]{\orgname{Charles Schwab Financial}, \orgaddress{\street{Street}, \city{Denver}, \postcode{100190}, \state{CO}, \country{USA}}}

\affil[2]{\orgdiv{Department of Statistics}, \orgname{U. Nebraska-Lincoln}, \orgaddress{\street{340 Hardin Hall North}, \city{Lincoln}, \postcode{68583-0963}, \state{NE}, \country{USA}}}

\abstract{We give two prediction intervals (PI) for Generalized Linear Models that take
model selection uncertainty into account.  The first is a straightforward extension of
asymptotic normality results and the second includes an extra optimization that
improves nominal coverage for small-to--moderate samples.  Both PI's are wider than
would be obtained without incorporating model selection uncertyainty.
We compare these two PI's with three other PI's.  Two are based on bootstrapping
procedures and the third is based on a PI from Bayes model averaging.
We argue that for general usage either the asymptotic normality or optimized
asymptotic normality PI's work best.
In an Appendix we extend our results to Generalized Linear Mixed Models.
}

\keywords{ prediction interval, genealized linear model, post-model selection}


\maketitle

\section{Introduction}

It is well known that linear models and their extensions - generalized linear, linear mixed,  generalized 
linear mixed models, and generlized mixed models - are the workhorses of statistical analysis. 
Aside from formulating such models, analysts have to chose amongst competing models usually
in the same class.  
Unless a model is proposed pre-experimentally, model selection from a model list is done after the 
data is collected. There are numerous model selection procedures, but regardless of which 
model is chosen as ``best'', the resulting model will have an associated variability inherited from 
the variability in the data.  How to take this variability into account properly when making predictions is the 
main topic of this paper. 

Common practice in many predictive contexts is to choose a model and then use it to generate
predictions.    Such plug-in methods are common in many modeling contexts.
This is pragmatic but neglects taking account of the uncertainty dure to model
selection or, pehaps more commonly, variable selection.  Here we propose 
prediction intervals (PI's) for generalized linear models (GLM's)
that are modified by the model selection principle (MSP) used for variable selection
so that their nomial coverage is asymptotically correct in the limit of sample sizes.
This is important because  \cite{Hong:etal:2018})
showed that using model selection procedure procedures like Akaike's information criterion
can result in predictive intervals with lower than nominal coverage if the PI's do not take
the uncertainty of the MSP into account.  

The post model selection inference problem has gained wide interest since the problem was 
first addressed in \cite{Berk:etal:2013}.  The so called post selection inference (PoSI) intervals 
introduced in \cite{Berk:etal:2013} are universally valid for any model selection principle (MSP). 
However, PoSI intervals are known to be conservative (see \cite{Leeb:etal:2015}) partially because 
they allow for any ad-hoc MSP to be used. The PoSI framework was used to construct universally 
valid (over all MSP's) confidence intervals for the mean of a predictive distribution in 
LM's in \cite{Bachoc:etal:2019}. Universally valid confidence regions for the simultaneous inference 
problem are constructed  \cite{Kuchibotla:etal:2020}.  A different approach was proposed by 
\cite{Efron:2014} that uses bootstrap intervals to address the post model selection inference 
problem under a single choice of MSP. 

\cite{stine:1985} introduced bootstrapped predictive intervals in linear regression in 1985, 
but these intervals did not consider uncertainty due to model selection.   
\cite{Leeb:2009} introduced a model selection procedure based on cross validation techniques 
and proved that using this technique, the resulting prediction interval from the selected model is 
approximately valid. While this is a seemingly strong and useful result, it holds only in the 
high sparsity case with $n << p$ as well as in the limit of large $n$. 
Specifically in his Proposition 4.3 the intervals are guaranteed to be within 
$1/\sqrt{n} + \epsilon$ of the nominal coverage for $0<\epsilon \leq \log(2)$.   More recently, 
predictive intervals based on Shorth -- i.e. the shortest interval containing a pre-specified number of 
values -- for GLM's and GAMs are studied in \cite{olive:2021}. 
While these intervals are valid,  and account for uncertainty due to the MSP,  they are not as intuitive 
and general as the ones we present in Subsec. \ref{asymptotic_PI}. 

Our methodology is in contrast to the PoSI-based intervals from \cite{Berk:etal:2013}  
that essentially widen PI's  until the nomial coverage is achieved.   Indeed, the PoSI intervals
take the pessimistic (if practical) view that model developers will use MSP's that are not 
theoretically sound.  Our approach is optimistic in that we assume a proper MSP with well known
theoretical properties will be used.  This allows us to incorporate the variability from a 
given generic MSP into our PI's.

Here,  we present two PI's that account for the uncertainty of an MSP in an intuitive manner
for GLM's. 
These PI's are easy to understand and, importantly, are easy to implement.  We also present a less 
intuitive way to construct a PI to give better coverage along the lines of PoSI intervals. 
This PI seems  work well in terms of coverage, but interpretation the interval is difficult.  

The structure of this paper is as follows.  In Sec. \ref{notation} we define the notation and setting
needed for our approach.  In fact, the notation incorporates much of the
intuition behind our approach.  In Sec. \ref{asymptotic_PI}, we present the main theorem that gives
a PI's dependent on an MSP for the case of GLM's.   We then give a finite sample improvement
for use with this PI in small-to-moderate sample settings.    We also define three other
intervals, two based on bootstrapping and one from the Bayes model average.
In Sec. \ref{simulations_glm} we present our simulation results. 
Finally, in Sec. \ref{sec_discussion} we summarize the
implications of our work.   We extend our theory to GLMM's in the Appendices.

\section{Notation and Setting}
\label{notation}

Throughout this paper we assume model selection and variable selection are synonymous, and defined as
followings.  Let $\mathcal{D}_n= \{(y_1, x_1), \ldots , (y_n,x_n) \}$ where $Y_i=y_i$ is an outcome
of the response variable and $x_i$ is a value of the $d$-dimensional explanatory variable.  We use superscripts to
indicate vectors, thus $y^n = (y_1, \ldots , y_n)^T$.
Let $m \in \mathcal{M}$ be a 
candidate model in the full collection of models $\mathcal{M}$. We define a variable selection 
procedure $M=M( \mathcal{D}_n)$ which takes the available data and maps it to a subset of 
variables based on some objective function we denote $Q$. We denote a chosen model 
$\hat{m} =\arg\min_m Q(m,\mathcal{D}_n)$.  We think of $Q$ as an objective function
such as the Akaike of Bayes information criterion (AIC, BIC)
or as a penalized loss function.   For instance, for linear models,
if $Q$ is the AIC, we have 
 \begin{equation}\label{AIC}
 Q_{AIC} (m, \mathcal{D}_n)=  -2\ln(p(y^n \vert  X^n_{m}, \hat{\beta}^{MLE}_{m})) + 2 d ,
\end{equation} 
where $d$ is the number of parameters that need to be 
estimated in model $m \in \mathcal{M}$. This defines a function
$$
M : \mathcal{D}_n \mapsto  \mathcal{M}.
$$ 
from  the data $\mathcal{D}_n$ into the model space.  That is,  we think of the
variable selection procedure as a function $M:\mathbb{R}^n \times\mathbb{R}^{n\times d}  \mapsto \mathcal{M}$. 

Other choices for $Q$ include the BIC given by
\begin{equation}
\label{BIC}
Q_{BIC} (m, \mathcal{D}_n)=  2\ln(p(y^n \vert  X^n_{m}, \hat{\beta}^{MLE}_{m})) + d \log (n),
\end{equation}
and the general Bethel-Shumway class of information criteria defined in \cite{bethel:etal:1988}.   
The BIC and the Bethel-Shumway class of information criteria are consistent for model/variable selection.

\subsection{Variable Selection}

In the predictive context, the interpretation of a parameter in a linear model is consistent across
models:  If the parameter, say, $\beta_j$, appears in multiple models it always means the change in $Y$
for a unit change in $x_j$ holding other explanatory variables constant.  This is in contrast to the 
modeling view,  see \cite{Berk:etal:2013},
that regards each parameter as an element in a whole model. and affected by each other's values.
Thus,  for us, if $\beta_j = 0$, this is mathematically equivalent to a 
model that does not include $x_j$.   Indeed, in practice, we often set a threshold $\eta >0$ and
and say that when $\vert \beta_j\vert  < \eta$,  we set $\beta_j =0$.  

With this in mind, we can define $M$ as follows. 
Let  $X = (x_1, \ldots, x_d)$, and define $M = \{\hat{\delta}_1, \ldots, \hat{\delta}_d\}$ where 
for $j= 1, \ldots , d$
$$
\hat{\delta}_j =
\begin{cases}
1 \text{ if } X_j \text{ is selected under} ~ $Q$\\
0 \text{ otherwise. }
\end{cases}
$$
For the true model we have $m_T= \{\delta_{1,T}, \ldots, \delta_{p, T}\}$ where 
$$\delta_{j, T} =
\begin{cases}
1 \text{ if } X_j \in m_T  \\
0 \text{ otherwise. }
\end{cases}
$$
Define the set $\mathcal{M}_S$ to be the set containing all possible combinations $M$ can take.  
The cardinality of $\mathcal{M}_S$ is $2^d$ and assuming $m_T$ exists,  
$m_T \in \mathcal{M}_S$.  We write
$$
\beta_T = (\beta_{\delta_{1,T}}, \ldots , \beta_{\delta_{p,T}}) \text{ and } \hat{\beta}_{M} = (\hat{\beta}_{\hat{\delta}_{1}}, \ldots , \hat{\beta}_{\hat{\delta}_{p}}),
$$ 
and  $\dim(\hat{\beta}_{M}) = \dim(\beta_{m_T})$. 
Note that if $\hat{\delta}_j = 0$, then by default we set $\hat{\beta}_{\hat{\delta_j}} =0$. 
Furthermore, we have that $\delta_{j,T} =0$  is equivalent to $\beta_j=0$. 

We specify our target of inference as $\beta_T$ and write
$$
\beta_{m_T} = \left(X'_{m_T}X_{m_T} \right)^{-1}X'_{m_T} E(Y)
$$ 
in the linear models context.  That is we are trying to estimate the true parameters, 
regardless of the model that is chosen. 
This is in contrast to the random target of inference 
$$
\beta_M = \left(X'_{M}X_{M} \right)^{-1}X'_{M} E(Y)
$$ 
defined in \cite{Berk:etal:2013}. 
Thus, for linear models,  we define the estimate for $\beta_j$ as follows: 
$$
\hat{\beta}_{\hat{M},j} = 
\begin{cases}
\left[\left(X'_{\hat{M}}X_{\hat{M}} \right)^{-1}X'_{\hat{M}} y \right]_j \text{ if } \hat{\delta}_j = 1   \\
0 \hspace{1.35in } \text{ if } \hat{\delta}_j = 1  .
\end{cases}
$$

Now there are two steps in the process of obtaining the true model. The first step is to estimate the $\delta_j$'s. In this step we want $M$ to give $\hat{\delta}_j = 1$ if $\delta_{j,T} = 1$, however, $M$ may also give  $\hat{\delta}_j = 1$ even if  $\delta_{j,T} = 0$. In this case, our definition allows the estimate $\hat{\beta}_{\delta_{j}}$ to be zero. Thus, even if $M$ includes variables that are not in $M_T$ we can still estimate their coefficients to be zero which allows $M \rightarrow m_T$ 
asymptotically (as seen in Theorem \ref{thm_AN_interval}).

\subsection{Prediction in Generalized Linear Models}
\label{sec_GLM}

As noted, we restrict attention to GLM's and GLMM's.  To be more precise,
suppose $Y \sim \mathcal{G}(\mu, R)$ where $\mathcal{G}$ is an exponential family with mean 
$\mu$ and variance $R$. Then the pdf of $Y$ given the canonical 
parameter $\theta$ is
\begin{equation}
\label{alt_exp_family}
f(y\vert \theta) = e^{\frac{y\theta - b(\theta)}{a(\phi)} + c(y, \phi)}
\end{equation}
where $\phi$ is a scale parameter.  In one parameter exponential families 
such as Poisson or Binomial distributions, $a(\phi) = 1$. 
From (\ref{alt_exp_family}) we have the following properties: 
\begin{itemize}
\item $E(Y \vert  X) = \frac{\partial b(\theta)}{\partial \theta} = \mu $ 
\item $Var(Y \vert  X) = a(\phi) \frac{\partial^2 b(\theta)}{\partial \theta^2} = a(\phi)V(\mu)$
\item $I(\theta) = Var(\ell(\theta \vert  y, \phi))$
\end{itemize}

Following standard GLM practice, we model the mean of $Y$ by transforming it to a 
linear function of the explanatory variables. The function we use to transform $E(Y)=\mu$ is 
called the link function and we denote it by $g(\cdot)$. Note that $g(\cdot)$ is a 
continuous invertible function. This gives us  the linear predictor
\begin{equation}
\eta = g(E(Y \vert  X )) =  g(\mu) = X \beta  
\end{equation}
and we define the inverse link function to be the inverse of $g(\cdot)$ which is
\begin{equation}
\mu= E(Y \vert  X ) = g^{-1}(X \beta).
\end{equation} 
For nwo, we assume that $X$ is of full rank, to avoid problems with estimability. 

Note that the canonical parameter $\theta$ is a function of $\mu$ so we write $\theta=\theta(\mu) = \theta(g^{-1}(X \beta))$. Now we can write the log-likelihood of \eqref{alt_exp_family} as
\begin{equation}
\label{log_exp_GLM}
\ell(\beta \vert  y, \phi ) = \frac{y\left(\theta(g^{-1}(X \beta))\right) - b(\theta(g^{-1}(X \beta)))}{a(\phi)} + c(y, \phi).
\end{equation}
Typically maximum likelihood along with the Newton-Raphson algorithm or Fisher scoring is used to estimate $\beta$. Usually the dispersion parameter is also unknown and must be estimated using $\hat{\phi}$.

Suppose the inferential goal is predicting the next outcome $Y^{n+1}$.
The usual point predictor under a MSP $M$ is  
\begin{equation}
\label{predictor_glm}
 \hat{Y}^{n+1}_{M} = \hat{\mu}_{M} = g^{-1}(X'^{n+1}_{M}\hat{\beta}_{M}). 
 \end{equation}
Henceforth, our focus in on constructing valid PI's for this point predictor.

\section{Candidate PI's}
 \label{asymptotic_PI}

In this section we define four PI's.  The first is derived in our Theorem \ref{thm_AN_interval}.
The second is an improvement on this interval by incorporating an extra optimization
to ensure more rapid convergence to the nominal coverage.  Both of these are in
Subsec. \ref{Main}.  In Subsec. \ref{bootstrap_procedure} we give our
third and fourth intervals are based on bootstrapping approach.
We will argue that our optimized interval provides the best performance.

\subsection{Main Result and Two PI's}
\label{Main}

One choice for a PI uses asymptotic normality of
the point predictor (\ref{predictor_glm}).    Define the statistic
\begin{equation}
\label{Z_pred_GLM}
Z_{pred} = Z_{pred}(M) = \frac{\hat{Y}^{n+1}_{M}- Y^{n+1}}{\sqrt{Var({\hat{Y}^{n+1}_{M}- Y^{n+1})}}}.
\end{equation}
We have the following result giving our first PI.

\begin{thm}
\label{thm_AN_interval}
Suppose $Y^{n}, Y^{n+1}$ come from an exponential family distribution and let $M$ be a consistent
MSP.
An asymptotically normal prediction interval for a new outcome derived from a GLM, is $PI(M)$ given by 
\begin{equation}\label{GLM_PI_AN}
 g^{-1}(X'^{n+1}_{M}\hat{\beta}_{M}) \pm z_{1-\alpha/2} \sqrt{\left[ \frac{d}{d \eta} g^{-1}(\hat{\eta}^{n+1}_{M})\right]^2 \ X'^{n+1}_{M} Var(\hat{\beta}_M)X^{n+1}_{M}+ a(\hat{\phi})_MV(\hat{\mu})_M} .
\end{equation}
\end{thm}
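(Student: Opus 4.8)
The plan is to reduce the claim to the classical large-sample theory of the maximum likelihood estimator in the \emph{true} model, using the consistency of $M$ to discard the event that a wrong model is selected, and then to combine the delta method with the independence of the new observation from the training sample.

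\textbf{Step 1 (discard the model-selection error).} Since $M$ is a consistent MSP, $P(\hat m = m_T)\to 1$, so for any event $A_n$ we have $P(A_n)=P(A_n\cap\{\hat m=m_T\})+o(1)$ and it suffices to establish the coverage statement conditionally on $\{\hat m=m_T\}$. On that event $\hat\beta_M$ coincides with the ordinary GLM maximum likelihood estimator computed in the true model (padded with zeros in the inactive coordinates, which do not enter ${X'}^{n+1}_M\hat\beta_M$), so the standard likelihood theory for \eqref{log_exp_GLM} gives $\sqrt{n}(\hat\beta_M-\beta_{m_T})\Rightarrow N(0,I(\theta)^{-1})$ on the active set, the plug-in $\widehat{Var}(\hat\beta_M)$ is consistent for $Var(\hat\beta_M)$, and likewise $a(\hat\phi)_M\to a(\phi)$ and $V(\hat\mu)_M\to V(\mu)$.

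\textbf{Step 2 (linearise the point predictor).} Write $\hat\eta^{n+1}_M={X'}^{n+1}_M\hat\beta_M$ and $\eta^{n+1}={X'}^{n+1}_{m_T}\beta_{m_T}$. Because $g^{-1}$ is continuously differentiable, a first-order Taylor expansion gives
$$\hat Y^{n+1}_M-\mu \;=\; g^{-1}(\hat\eta^{n+1}_M)-g^{-1}(\eta^{n+1}) \;=\; \Bigl[\tfrac{d}{d\eta}g^{-1}(\eta^{n+1})\Bigr]\,{X'}^{n+1}_M(\hat\beta_M-\beta_{m_T}) + o_p(n^{-1/2}),$$
so by Step 1, Slutsky's theorem and the delta method $\hat Y^{n+1}_M$ is asymptotically normal with mean $\mu$ and variance $\bigl[\tfrac{d}{d\eta}g^{-1}(\eta^{n+1})\bigr]^2{X'}^{n+1}_M Var(\hat\beta_M)X^{n+1}_M$; replacing $\eta^{n+1}$ by $\hat\eta^{n+1}_M$ and $Var(\hat\beta_M)$ by its estimate perturbs this by $o_p$ of the same order, by continuity. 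Then I would bring in the new response: $Y^{n+1}$ is drawn independently of $\mathcal{D}_n$, hence of $\hat\beta_M$ and of $\hat Y^{n+1}_M$, and by \eqref{alt_exp_family} it has conditional mean $\mu=g^{-1}(\eta^{n+1})$ and conditional variance $a(\phi)V(\mu)$. Writing $\hat Y^{n+1}_M-Y^{n+1}=(\hat Y^{n+1}_M-\mu)-(Y^{n+1}-\mu)$ as a sum of independent terms gives $Var(\hat Y^{n+1}_M-Y^{n+1})=Var(\hat Y^{n+1}_M)+a(\phi)V(\mu)$, which is exactly the radicand of \eqref{GLM_PI_AN} after the substitutions above and of $a(\hat\phi)_M,V(\hat\mu)_M$ for their population values. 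Combining these facts, $Z_{pred}(M)$ of \eqref{Z_pred_GLM} converges in distribution to a standard normal, so $P(|Z_{pred}(M)|\le z_{1-\alpha/2})\to 1-\alpha$; rearranging the inequality $|\hat Y^{n+1}_M-Y^{n+1}|\le z_{1-\alpha/2}\sqrt{Var(\hat Y^{n+1}_M-Y^{n+1})}$ places $Y^{n+1}$ in $PI(M)$ and proves the theorem.

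\textbf{Main obstacle.} The delicate point is the last step: $\hat Y^{n+1}_M-\mu$ is $O_p(n^{-1/2})$ and asymptotically normal, but $Y^{n+1}-\mu$ is an $O_p(1)$ single exponential-family draw that is \emph{not} itself normal, so $Z_{pred}(M)$ is genuinely asymptotically standard normal only in a working-normality regime for the response (Gaussian GLMs, or Poisson/Binomial in the large-mean / many-trials limit); outside it the interval is an approximation, which is precisely the motivation for the finite-sample optimization in Subsec.~\ref{Main}. A secondary, purely technical issue is making the Taylor remainder and the plug-in substitutions $o_p$ of the relevant scale uniformly enough to survive conditioning on $\{\hat m=m_T\}$, which is routine given the consistency of $M$ and the smoothness of $g^{-1}$ and $V$.
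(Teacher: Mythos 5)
Your proposal follows essentially the same route as the paper's own proof: reduce to the true model via consistency of the MSP (your conditioning on $\{\hat m = m_T\}$ plays exactly the role of the paper's decomposition over $S_n$ and $S_n^c$ with $\sqrt{n}\,\mathbf{1}_{S_n^c}=o_p(1)$), then apply the delta method to $g^{-1}({X'}^{n+1}_{M}\hat\beta_M)$, add the independent variance $a(\phi)V(\mu)$ of the new outcome, and treat $Z_{pred}$ of \eqref{Z_pred_GLM} as an approximate pivot. The ``main obstacle'' you flag --- that $Y^{n+1}-\mu$ is a non-normal $O_p(1)$ draw, so $Z_{pred}$ is only genuinely asymptotically standard normal under a working-normality regime for the response --- is real but is equally unaddressed in the paper's proof, which simply asserts $1-\alpha\leq P(|Z_{pred}|<z_{1-\alpha/2})$, so your argument matches the paper's in both structure and level of rigor.
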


\begin{proof}



Asymptotically, for any fixed $m$,  

$$
\hat{\beta}_{m} \sim N\left(\beta_{m}, (X'_{m}WX_{m})^{-1}\right) 
$$
where $W =(DVD)^{-1}$.  In this notation, $V = \hbox{diag}[Var(y_i)]$ is the $n \times n $ variance matrix of 
observations, $D = \hbox{diag}[\frac{\partial \eta_i}{\partial \mu_i}]$  is the $n \times n$ matrix of derivatives 
and $\mu$ is the $n \times 1$ mean vector.  This implies 
\begin{equation}
\label{kuch_converge}
\sqrt{n}(\hat{\beta}_{m} - \beta_{m}) \overset{D}{\rightarrow} N(0, (X'_{m_T}WX_{m_T})^{-1}) 
\end{equation}
where $0 \in\mathbb{R}^{p}$ and $(X'_{m}WX_{m})^{-1}  \in \mathbb{R}^{\vert M\vert \times \vert M\vert}$.   
 
 
 

While this is useful, it is only a step toward the convergence 
$\hat{\beta}_{M} \rightarrow \beta_{m_T}$.   Since $M$ is consistent,
$m_T \in \mathcal{M}$ and hence 
 $\hat{\delta_j} \rightarrow \delta_{j, T}$ with probability 1 for all $j$ which implies $M \rightarrow m_T$. Hence, with this assumption we get an analog to (\ref{kuch_converge}) 

\begin{equation}\label{beta_converge_MT}
\sqrt{n}(\hat{\beta}_{M} - \beta_{m_T}) \overset{D}{\rightarrow} N(0, V^*_{m_T} )
\end{equation}
 where $V^*_{m_T} = (X'_{m_T}WX_{m_T})^{-1}$. Now we define the set 
 
 $$S_n = \{ \omega \vert \forall j,  \hat{\delta}_{j}(\omega) = \delta_{j, T}\} $$
and let $\textbf{1}_{S_n}$ be the indicator that $\omega \in S_n$. Further, let  $\textbf{1}_{S^c_n}$ be the indicator that $\omega$ is in the complement of $S_n$ and write 

\scalebox{.9}{\parbox{1 \linewidth}{
\begin{equation}
\label{Sn_convergence}
\sqrt{n}(X'^{n+1}_{M}\hat{\beta}_{M} - X'^{n+1}_{m_T}\beta_{m_T}) = \sqrt{n}(X'^{n+1}_{M}\hat{\beta}_{M} - X'^{n+1}_{m_T}\beta_{m_T})\textbf{1}_{S_n} + \sqrt{n}(X'^{n+1}_{M}\hat{\beta}_{M} - X'^{n+1}_{m_T}\beta_{m_T})\textbf{1}_{S^c_n}.
\end{equation}
}}

First, note that the first term on the RHS of (\ref{Sn_convergence}) becomes 
$$\sqrt{n}(X'^{n+1}_{M}\hat{\beta}_{M} - X'^{n+1}_{m_T}\beta_{m_T})\textbf{1}_{S_n} = \sqrt{n}(X'^{n+1}_{m_T}\hat{\beta}_{m_T} - X'^{n+1}_{m_T}\beta_{m_T})\textbf{1}_{S_n} $$ under consistent model selection. This term clearly converges in distribution to a normal. Namely 
$$ \sqrt{n}(X'^{n+1}_{m_T}\hat{\beta}_{m_T} - X'^{n+1}_{m_T}\beta_{m_T})\textbf{1}_{S_n} \overset{D}{\rightarrow} N\left(0, X'^{n+1}_{m_T}V^*_{m_T}X^{n+1}_{m_T}\right) $$
Now observe the second term on the RHS of (\ref{Sn_convergence}) can be written  as
\begin{align}\label{T2_Sn}
&\nonumber \sqrt{n}(X'^{n+1}_{M}\hat{\beta}_{M} - X'^{n+1}_{m_T}\beta_{m_T})\textbf{1}_{S^c_n}  \\
\nonumber& = \sqrt{n}(X'^{n+1}_{M}\hat{\beta}_{M} -X'^{n+1}_{M}\beta_{m_T} +X'^{n+1}_{M}\beta_{m_T} - X'^{n+1}_{m_T}\beta_{m_T})\textbf{1}_{S^c_n} \\
\nonumber & = \sqrt{n}(X'^{n+1}_{M}\hat{\beta}_{M} -X'^{n+1}_{M}\beta_{m_T})\textbf{1}_{S^c_n} +\sqrt{n}(X'^{n+1}_{M}\beta_{m_T} - X'^{n+1}_{m_T}\beta_{m_T})\textbf{1}_{S^c_n}\\
 & = X'^{n+1}_{M}\sqrt{n}(\hat{\beta}_{M} -\beta_{m_T})\textbf{1}_{S^c_n} +\sqrt{n}(X'^{n+1}_{M} - X'^{n+1}_{m_T})\beta_{m_T}\textbf{1}_{S^c_n}.
\end{align}
Using (\ref{beta_converge_MT}) and the fact that  $X'^{n+1}_{M}$ is bounded, we know the first in (\ref{T2_Sn}) converges in distribution to a normal.  Also, we see that in the second term in (\ref{T2_Sn}), $\beta_{m_T}$ is a bounded constant vector,  $(X'^{n+1}_{M} - X'^{n+1}_{m_T}) $ is bounded and $P(S^c_n) \overset{p}{\rightarrow} 0$ by assumption. Thus we see  asymptotically that (\ref{Sn_convergence}) is 
\begin{align}\label{Sn_convergence_infty}
\nonumber \sqrt{n}(X'^{n+1}_{M}\hat{\beta}_{M} - X'^{n+1}_{m_T}\beta_{m_T}) & \overset{\infty}{=} N(0,X'^{n+1}_{m_T}V^*_{m_T}X^{n+1}_{m_T})  \textbf{1}_{S_n}  \\
\nonumber &+ N(0,X'^{n+1}_{m_T}V^*_{m_T}X^{n+1}_{m_T} ) \textbf{1}_{S^c_n} + \sqrt{n}  \textbf{1}_{S^c_n} 
\end{align}
Now we need to show that $\sqrt{n}  \textbf{1}_{S^c_n} = o_p(1) $. First note that the union of events bound gives
$$
P(S^c_n) \leq \sum^p_{j=1}  P( \vert  \hat{\delta}_j - \delta_{j, T}\vert  > \eta ),
$$ for some $\eta>0$, so using symmetry in the MSP it is enough to show $$\lim_{n \rightarrow \infty } P( \vert  \hat{\delta}_j -   \delta_{j,T}  \vert > 1/\sqrt{n} ) = 0 $$ for any $j$. 
It is easy to see that 
\begin{equation}\label{littleopSc}
\lim_{n \rightarrow \infty } P( \vert  \hat{\delta}_j -   \delta_{j,T}  \vert > 1/\sqrt{n} )  = \lim_{n \rightarrow \infty } P( \vert  \hat{\delta}_j -   \delta_{j,T}  \vert = 1 )   
\end{equation}
because $\delta_{j,T}$ and $\hat{\delta}_j$ are either 1 or 0.  Now  because we have chosen a consistent MSP we have $$ \lim_{n \rightarrow \infty } P( \vert  \hat{\delta}_j -   \delta_{j,T}  \vert = 1 )   = 0. $$ Hence, the left hand side of (\ref{littleopSc}) is also equal to zero, implying that  $\sqrt{n}  \textbf{1}_{S^c_n} = o_p(1) $.
Now, Slutsky's theorem gives us, 
$$
\sqrt{n}(X'^{n+1}_{M}\hat{\beta}_{M} - X'^{n+1}_{m_T}\beta_{m_T}) \overset{D}{\rightarrow} N\left(0, X'^{n+1}_{m_T}V^*_{m_T}X^{n+1}_{m_T}\right).
$$
Now to get a predictive distribution, we observe that the delta method gives us  
 \begin{equation}\label{delta_var}
 \sqrt{n}\left(g^{-1}(X'^{n+1}_{M}\hat{\beta}_{M}) - g^{-1}(X'^{n+1}_{m_T}\beta_{m_T} ) \right) \overset{D}{\rightarrow} N \left(0,  \left[ \frac{d}{d \eta} g^{-1}(\eta^{n+1}_{m_T})\right]^2  X'^{n+1}_{m_T} V^*_{m_T}X^{n+1}_{m_T}  \right)
 \end{equation}
 where $\eta^{n+1}_{m_T} = X'^{n+1}_{m_T}\beta_{m_T}$
 Thus, we see that although the GLM estimates are biased we still get convergence in distribution when model selection occurs in the $\mathcal{M}$-closed case. 
The variance of $\hat{Y}^{n+1}_{M} - Y^{n+1} $ is 
\begin{align}\label{pred_int_var}
\nonumber Var(\hat{Y}^{n+1}_{M} - Y^{n+1}) &= Var(\hat{Y}^{n+1}_{M}) + Var(Y^{n+1})\\
\nonumber& =  Var(g^{-1}(X'^{n+1}_{M}\hat{\beta}_{M})) + a(\phi)V(\mu)\\
&= \frac{1}{n}\left[ \frac{d}{d \eta} g^{-1}(\eta^{n+1}_{m_T})\right]^2  X'^{n+1}_{m_T} V^*_{m_T}X^{n+1}_{m_T} + a(\phi)V(\mu)
\end{align}
due to (\ref{delta_var}). Again, because $Y^{n+1}$ is a random variable, and not a parameter, we must consider the variance of it as well, which we get assuming it will come from the exponential family distribution as $Y_1, \ldots, Y_n$. This quantity, however, is impossible to compute because we do not know $m_T$. Hence, we must replace $m_T$ with $M$, making the variance a random quantity that depends on model selection.  
Now we use (\ref{Z_pred_GLM}) as a pivotal quantity to get 

 \scalebox{.88}{\parbox{1\linewidth}{
 \begin{align}\label{Pred_interval_derive}
 \hspace*{-.65cm} \nonumber 1-\alpha & \leq P\left(\left \vert  Z_{pred} \right \vert  < z_{1-\alpha/2} \right)\\
 \nonumber&= P\left(\left \vert  \hat{Y}^{n+1} - Y^{n+1}  \right \vert  < z_{1-\alpha/2} \sqrt{Var(\hat{Y}^{n+1} - Y^{n+1})} \right)\\
 &= P\left( \hat{Y}^{n+1} - z_{1-\alpha/2}\sqrt{Var(\hat{Y}^{n+1} - Y^{n+1})} < Y_{n+1}  <  \hat{Y}^{n+1}  + z_{1-\alpha/2} \sqrt{Var(\hat{Y}^{n+1} - Y^{n+1})} \right) .
 \end{align} 
 }}
 Hence using (\ref{pred_int_var})
$$
 \left[ g^{-1}(X'^{n+1}_{M}\hat{\beta}_{M}) \pm z_{1-\alpha/2} \sqrt{\frac{1}{n}\left[ \frac{d}{d \eta} \widehat{g^{-1}(\eta^{n+1}_{M})}\right]^2 \ X'^{n+1}_{M} V^*_M X^{n+1}_{M}+ a(\hat{\phi}_M)V(\hat{\mu}_M)} \right ]
$$
is a $100(1-\alpha) \%$ prediction interval for $Y^{n+1}$
\end{proof}

We now offer an improvement on the PI from Theorem \ref{thm_AN_interval}.  Note that thisn result
uses the standard normal quantile to define the predictive interval, but this is not the only choice. 
Instead, we can adjust the width of the interval to correct to poor coverage. To do this, we write the interval  as 
\begin{eqnarray}
&& PI(C_{\alpha, M}) = g^{-1}(X'^{n+1}_{M}\hat{\beta}_{M}) \pm \nonumber \\
&&C_{\alpha,M} \sqrt{\left[ \frac{d}{d \eta} g^{-1}(\hat{\eta}^{n+1}_{M})\right]^2 \ X'^{n+1}_{M} Var(\hat{\beta}_M) X^{n+1}_{M}+ a(\hat{\phi})V(\hat{\mu})}
\label{GLM_PI_PoSI_AN}
\end{eqnarray}
where $C_{\alpha, M}$ is chosen to satisfy 
\begin{equation}\label{C_alpha}
C_{\alpha, M} = \arg\min_{C} P\left( Y^{n+1} \in PI(M,C)\right)
\end{equation}
for all $C$ such that $P\left( Y^{n+1} \in PI(M,C)\right) \geq 1 - \alpha$. Importantly, this 
probability also sees the random variable $M$ and hence inherits the uncertainty 
associated with $M$ as well and $Y^{n+1}$. This is in the same spirit of the 
PoSI constant in \cite{Berk:etal:2013}. That is, we enlarge $C_{\alpha,M}$ to 
account for the uncertainty in $M$.  We can approximate $C_{\alpha, M}$ using Monte Carlo cross validation.

We begin by choosing an interval on $\mathbb{R}^+$ denoted  $\mathcal{C} = [ C_{low}, C_{high} ]$ that we will perform the line search on to estimate  $C_{\alpha, M} $. Next, we randomly splitting $\mathcal{D}_n$ into $L \in \mathbb{N}$  test and train sets, $\mathcal{D}_{train, \ell}$ and $\mathcal{D}_{test, \ell}$ for $\ell = 1, \ldots, L$. Then for each $\ell$ we estimate $\beta$ by
$$
\hat{\beta}_{\ell} = (X'_{train, M,\ell}X_{train,M,\ell})^{-1}X'_{train,M,\ell}y_{train,\ell}
$$
using $\mathcal{D}_{train, m}$, and form the predictor  $\hat{Y}^{test}_{M,\ell} = g^{-1}(X_{test, M ,\ell} \hat{\beta}_{\ell} )$. Now we form the prediction interval $PI_{M,\ell}(C)$ namely $ \hat{Y}^{n+1}_{M,\ell} \pm$
 \begin{eqnarray}
 C \sqrt{\left[ \frac{d}{d \eta} g^{-1}(\hat{\eta}^{n+1}_{test,M, \ell})\right]^2 \ X'^{n+1}_{test,M,\ell} Var(\hat{\beta})_{test,M,\ell} X^{n+1}_{test,M,\ell}+ a(\hat{\phi})_{M,\ell}V(\hat{\mu})_{M,\ell}},
\label{optimized}
\end{eqnarray}
and for each $C \in \mathcal{C}$, check if $y_{test, \ell} \in PI_{M,\ell}(C)$. 
Then we choose the value $C$ that gives us $1-\alpha$ coverage for the 
Monte Carlo samples. More formally, we can approximate $C_{\alpha, M} $ by 
\begin{equation}
\label{C_opt_MC}
\hat{C}^{MC} = \arg\min_{C}  \frac{1}{L} \sum^L_{\ell=1} \left \vert\frac{1}{\#(\mathcal{D}_{test,\ell})}\sum^{\#(\mathcal{D}_{test,\ell})}_{i=1} I_{y_{test,\ell} \in PI_{M,\ell}(C)} - (1-\alpha) \right \vert
\end{equation}
where $I_{y_{test,\ell} \in PI_{M,\ell}(C)}$ is the indicator that the test values are in the constructed intervals. 

The intuition behind using this interval in place of the PI in Theorem \ref{thm_AN_interval} is that, in finite 
samples the difference between $z_{1-\alpha/2}$ and $\hat{C}^{MC}$ can be interpreted as t
he added variability due to model uncertainty.

\subsection{Two Bootstrap Based PI's}
\label{bootstrap_procedure}

In the frequentist setting, perhaps the most natural way to obtain a prediction interval that takes into 
account the uncertainty of both model selection and the uncertainty associated with the 
distribution of the new outcome is to make use of the bootstrap. 
Accordingly,  to form our fiest bootstraped PI,
we use the bootstrap to estimate the distribution of  
\begin{equation}\label{mu_M}
\hat{\mu}_M= E(Y^{n+1}\vert X^{n+1}_{M}) = g^{-1}(X^{n+1}_M\hat{\beta}_M),
\end{equation}
and $ a(\hat{\phi})_M $.  Then for each bootstrapped mean and dispersion function, we generate a 
new observation from the distribution of $Y^{n+1} \vert X^{n+1}, \mu, \phi $, i.e. $\mathcal{G}$.   
 
 Let $\hat{p}(\hat{\mu})$ denote be the bootstrapped density of  (\ref{mu_M})  and $\hat{p}(a(\hat{\phi}))$ be the bootstrapped density of $ a(\hat{\phi})_M $. Then $\hat{p}(Y^{n+1})$ be the resulting estimated density of $Y^{n+1}$. 
 
 
 
 The procedure is as follows, 
\begin{itemize}
\item obtain $B$ bootstrap replications of $\hat{\mu}_M$, denoted $\mu^*_{1}, \ldots , \mu^*_{B}$
\item obtain $B$ bootstrap replications of $a(\hat{\phi})_M$, denoted $a(\phi)^*_{1}, \ldots , a(\phi)^*_{B}$
\item generate $y^*_{1}(\mu^*_{1},a(\phi)^*_{1}), \ldots, y^*_{B}(\mu^*_{B},a(\phi)^*_{B})$, from $\mathcal{G}$.
\end{itemize}
The sample $y^*_1, \ldots, y^*_B$ can be used to estimate an approximate marginal predictive distribution for $Y^{n+1}$.  To obtain the PI, we use the appropriate percentile interval from this distribution. That is, to obtain a $100(1-\alpha) \%$ PI we use the interval 
\begin{equation}\label{PI_boot_new}
[q^*_{1-\alpha/2}, q^*_{\alpha/2}]
\end{equation}
 where $q^*_{\alpha}$ is the $\alpha$ quantile from $\hat{p}(Y^{n+1})$. The use of $\hat{p}(\hat{\mu})$ and $\hat{p}(a(\hat{\phi}))$  to obtain the  estimated predictive distribution $\hat{p}(Y^{n+1})$ allows $\hat{p}(Y^{n+1})$ to inherit the variability from $\hat{p}(\hat{\mu})$, $\hat{p}(a(\hat{\phi}))$ and the variability that is already associated with the known parametric distribution $\mathcal{G}$. Hence, the interval (\ref{PI_boot_new}) is typically widened due to the uncertainty of the model selection procedure as well as the uncertainty of the distribution of $Y^{n+1}$. 

Now, in the GLM setting, coverage for the PI in (\ref{PI_boot_new}) should be closer to 
the $1-\alpha$ nominal coverage than the PI resulting from ignoring the model 
uncertainty. 
Note that as $n\rightarrow \infty$ the variability due to model uncertainty 
will go to 0 and this interval will converge to the standard PI. 
Bootstrap PI's for the Gaussian case are studied in a fairly narrow 
(small $d$ and moderate $n$) setting in \cite{Hong:etal:2018_2}. These authors suggest 
that in this setting the bootstrap distribution fails to assess the uncertainty of model selection accurately.
 We explore different simulation settings to evaluate the performance of 
bootstrap intervals in Subsec. \ref{simulations_glm}.

Our second bootstrapped PI is formed as follows.  Recall
he interval in Theorem \ref{thm_AN_interval} is a random  
because it depends on $M$. It is directly usable for predictions, but we must use 
$\hat{M}$ in place of $M$ to get a confidence statement. 
Nevertheless, we provide an approximate interval by ``smoothing'' over $M$, which accounts for the uncertainty of $M$ in both the center and width of the interval. This is similar to the approach used in \cite{Efron:2014} for estimation. The method we propose is to use $\hat{p}(\hat{\mu})$, the bootstrap the distribution of $\hat{\mu}_M=g^{-1}(\hat{\eta}_M)$ as described 
earlier in this Subsection, to obtain an approximation for the 
predictor and its variance that accounts for model selection uncertainty. 

Specifically, we use 
$$
\tilde{\mu}  = \frac{1}{B}\sum^B_{b=1} \mu^*_b
$$
for the point predictor. We approximate the variance of $\hat{\mu}_M$ with
 \begin{align*}
 Var(\mu^*) & = \widehat{Var}(\hat{\mu}_M)   \\
& = \frac{1}{B-1} \sum^B_{b=1} \left(\mu^*_b-\tilde{\mu}   \right)^2  \\
 & \approx \left[ \frac{d}{d \eta} g^{-1}(\hat{\eta}^{n+1}_{M})\right]^2 \ X'^{n+1}_{M} Var(\hat{\beta}_M) X^{n+1}_{M},
 \end{align*}
 and the estimated variance of the predictive distribution is given by 
 \begin{align*}
 Var(Y^*) & =  \widehat{Var}(Y^{n+1}) \\
 & =   \frac{1}{B-1} \sum^B_{b=1} \left(y^*(\mu^*_{b}) - \bar{y}^*  \right)^2 \\
& \approx a(\hat{\phi}_M)V(\hat{\mu}_M)
 \end{align*}
 where $\bar{y}^*=   \frac{1}{B-1} \sum^B_{b=1} y^*(\mu^*_{b})$.  We treat $Y^*$ as a random variable approximating $Y^{n+1}$. Note also that we are required to estimate  $a(\hat{\phi}_M)V(\hat{\mu}_M)$ in (\ref{GLM_PI_AN}), but this again is a random variable so we using bootstrapping to account for the uncertainty in $M$ is necessary for this term also. 
Now as an ad-hoc fix, we rewrite (\ref{GLM_PI_AN}) to give our second bootstrapped PI
\begin{equation}
\label{GLM_PI_boot_AN}
PI(M) = \tilde{\mu} \pm z_{1-\alpha/2} \sqrt{  Var(\mu^*) + Var(Y^*))}.
\end{equation}

\section{Simulation Results for GLM's}
\label{simulations_glm}

We give two contexts in which the PI's we have defined in (\ref{GLM_PI_AN}), (\ref{GLM_PI_PoSI_AN}),
(\ref{PI_boot_new}), (\ref{GLM_PI_boot_AN}) can be readily found.
 Respectively, these intervals are labeled the asymptotic normal PI (AN),  the optimized AN $\hat{C}^{MC}$,
the bootstrapped (boot) PI, and the `smoothed' asymptotic normal (S-AN) PI.  In addition to the intervals we 
have derived, we give the BMA PI's as well as the `Naive' PI's obtained by applying the inverse link to 
a confidence interval for the mean on the linear predictor scale; this is often done by practitioners 
as a pragmatic solution. 

 In Sec. \ref{subsec_gaussian} we present these intervals for the standard Gaussian case and in Sec. \ref{subsec_binom}, we present the prediction intervals for binomial regression, i.e., a more general case of logistic regression.  
For both cases we use 500 new observations from their respective distribution and calculate 
the estimated predictive coverage using 

\begin{equation}\label{coverage_sims}
\widehat{coverage} = \frac{1}{500} \sum^{500}_{i =1} I_{y^{new}_{i} \in PI_i(X^{new}_i, X^n, y ^n)}, 
\end{equation}
where each $PI_i(X^{new}_i, X^n, y ^n)$ depends on the data and the new observed explanatory variables.  For the PIs that require bootstrapping we resample the data 500 times to obtain the bootstrapped distributions. 

\subsection{Gaussian Linear Models }
\label{subsec_gaussian}

 In the standard case, we assume $Y \sim N(\mu, \sigma^2)$, and the log likelihood is
$$
L(\mu_i, \sigma^2 \vert  y_i) =\frac{y_i\mu_i - (\mu_i^2/2)}{\sigma^2} - \left(\frac{y^2_i}{2\sigma^2} + \log(\sigma \sqrt{2\pi}\right),
$$
the canonical parameter is $\theta_i = \mu_i$, $b(\theta) = \mu+i^2/2$, $a(\phi) = \sigma^2$, and $V(\mu_i) = 1$. 
The linear predictor uses the identity link function and the point predictor is
$$
\hat{Y}^{n+1}_{M} = X'^{n+1}_{M}\hat{\beta}_M.
$$
The asymptotic normal PI from (\ref{GLM_PI_AN}) for $Y^{n+1}$ is
$$
PI(M) =  \left[\hat{Y}^{n+1}_{M} \pm z_{1-\alpha/2} \hat{\sigma}_M \sqrt{X'^{n+1}_{M}(X'_{M}X_{M})^{-1}X^{n+1}_{M} + 1} \right].
$$

Our simulation results for Gaussian data includes coverage and width estimates for the normal PI in (\ref{GLM_PI_AN}),
the PI \eqref{optimized} using $\hat{C}^{MC}$,  the bootstrap PI in (\ref{PI_boot_new}),  and the `smoothed'
normal interval (\ref{GLM_PI_boot_AN}).    We do not include the Naive interval because in the Gaussian case
it is equivalent to AN.   For the interval using $\hat{C}^{MC}$, we do a grid search for the value of $C_{\alpha, M}$ 
on the interval from 1.95 to 5 in increments of 0.05.  

The simulation setup is as follows. First, we consider two model selection procedures, BIC and AIC. Both methods are implemented in R using the step() function by setting the respective penalties for BIC and AIC. We also use BMA implemented with the  {\textsf{BAS} } package in R. We consider various choices for $n$ (30,50,100,200) and choose $p=25$. We randomly generate values for $\sigma$ and $\beta$ once, and fix those values throughout the simulations. Accordingly, let 
$$
\beta = (\beta_1, \ldots, \beta_{25})' = (6.43, 4.39,4.26,4.11, 0,\ldots, 0)'
$$
and $\sigma = 0.93$. We simulate $n$ observations for  the design matrix $X$  according to 

$$
X \sim MVN_p(0, I_p),
$$
and then draw and  $n\times 1$ vector of observations from  $Y \sim N(X\beta, \sigma^2I_n)$.   
We then calculate estimated coverage using (\ref{coverage_sims}). Ideally, we want coverage close to $0.95$.  
When choosing between competing PI's with good coverage, we prefer the one with the narrowest width.
The results are seen in Table \ref{simulations_Guassian}.

\begin{table}[h!]
\centering
\caption{{\textbf{Simulation results for Gaussian data with $p=25$ and $p_0 = 4$. }}}
\begin{tabular}[t]{|c|c|c|c|c|}
\hline
 $n$ &MSP &Interval & Coverage  & Avg.Width (SE) \\
\hline
50 &AIC & AN    &   0.88        & 3.63 (0.17)  \\
  & & S-AN   & 1  & 14.2 (3.81) \\
 & & boot &  0.99 & 8.8 (1.87) \\
     &       & $\hat{C}^{MC}$  &  0.98     & 5.5 (0.35) \\
    &BIC     & AN     & 0.88 & 3.63 (0.18) \\
    &     &S-AN     & 1  & 13.79 (3.80) \\
    &           &boot &  0.99&  8.4 (1.78)\\
     &       & $\hat{C}^{MC}$  &   0.98   & 5.37 (0.26) \\
    & BMA &  & 0.89 &  3.98 (0.11)\\
\hline
100 &AIC & AN    &   0.91        & 3.63 (0.08)  \\
 & & S-AN    & 0.95 & 4.43 (0.41) \\
 & & boot &  0.92 & 3.81 (0.26) \\
     &       & $\hat{C}^{MC}$  &   0.93   & 4.08 (0.09) \\
  &BIC &  AN    &   0.92    & 3.70 (0.06)  \\
 &  &S-AN    &  0.94 & 4.25  (0.33)\\
 & & boot & 0.92  &  3.74 (0.22) \\
     &       & $\hat{C}^{MC}$  &   0.94   & 3.97 (0.05) \\
   & BMA  & & 0.93 & 3.73 (0.05) \\
\hline
200 &AIC & AN    &   0.92    & 3.70 (0.05)  \\
 & & S-AN   & 0.93  & 3.95 (0.16) \\
 & & boot & 0.91  & 3.64 (0.14) \\
     &       & $\hat{C}^{MC}$  &   0.93   &  3.96 (0.06) \\
 &BIC &  AN    &   0.94    & 3.75 (0.03)  \\
 & & S-AN  & 0.94 &  3.91 (0.14)\\
 & & boot &  0.92 & 3.65 (0.12) \\
 &       & $\hat{C}^{MC}$  &   0.94   & 3.92 (0.03) \\
   & BMA  & & 0.92 & 3.74  (0.03) \\
\hline
\end{tabular}
\label{simulations_Guassian}
\end{table}%

\begin{table}[h!]
\centering
\caption{{\textbf{Gaussian cross validation results for the optimal choice for $\hat{C}^{MC}$.}}}
\begin{tabular}[t]{|c|c|c|c|c|}
\hline
n & MSP &$\hat{C}^{MC}$ \\
\hline
50 &  AIC & 2.95  \\
     &  BIC & 2.90  \\
\hline
100 &  AIC &  2.20 \\
       &  BIC &  2.10 \\
\hline
200 &  AIC &  2.10 \\
      &  BIC &  2.05 \\
\hline
\end{tabular}
\label{C_opt_gaussian}
\end{table}%

Note that the differences between using AIC and BIC are negligible, so we describe the performance 
of each PI only once (rather than once for each MSP). It is seen in Table \ref{simulations_Guassian} that 
AN has low coverage for $n =50$, but gets close to the nominal coverage for the larger sample 
sizes.  
For $n=50$, both S-AN and boot give at least the nominal coverage and arguably reasonable 
width of PI's to be useful. Here,  $\hat{C}^{MC}$ gives close to the stated 95\% coverage and is 
noticeably narrower than both S-AN and boot, so it is the preferred PI.

When $n=100$ and 200, we observe all of the 5 PIs are roughly equal in terms of coverage and width. Since AN is the easiest to implement as it does not require any bootstrapping or cross validation, we recommend using it with relatively large $n$. For intermediate $n$ we recommend using $\hat{C}^{MC}$ as it gives appropriate coverage and is narrower than the other PIs. 

We give the optimal choices for $\hat{C}^{MC}$ for each sample size in Table \ref{C_opt_gaussian}. We observe that as sample size increases, $\hat{C}^{MC}$ decreases as expected. This reflects the fact that as we gather more data, the uncertainty in model selection also decreases. 



\subsection{Binomial Regression}\label{subsec_binom}

Suppose we have $n$ independent but not identically distributed random variables
 following $Y_i  \sim Bin(r_i,p_i)$ so $E(Y_i) = r_i p_i$. We write $W=\frac{Y_i}{r_i}$ as our response 
to model the proportion of success, and then we convert back to number of successes to form 
our predictive interval.  Now we have $E(W) = p_i$ and the 
log likelihood for a given $i$  is given by 
$$
L(p_i \vert  w_i) = \frac{ w_i \log\left(\frac{p_i}{1-p_i}\right) + \log(1-p_i)}{\frac{1}{r_i}} + \log\binom{r_i}{nw_i},
$$
which reveals the canonical parameter $$\theta_i = logit(p_i) = \log\left(\frac{p_i}{1-p_i}\right).$$ We also see that $a(\phi) =\frac{1}{r_i}$, $b(\theta_i) = - \log(1+e^{\theta_i}) = -\log(1-p_i)$, and thus $$V(p_i) = \frac{\partial^2 b(\theta_i) }{\partial p_i^2} =  \frac{p_i(1-p_i)}{r_i}.$$ Thus the linear predictor is defined by the logit link as 
$$E\left(\frac{Y_i}{r_i}\right) = g(p_i) = \log\left(\frac{p_i}{1-p_i}\right)  = X'_i\beta
$$ 
and the inverse link function, which gives the probability of success, is given by 
$$
p_i= g^{-1}(X'_i\beta)  = \frac{1}{1+ e^{-X'_i\beta}}. 
$$
Of course, we do not know $p_i$, so we estimate $p_i$ by 
$$
\hat{p}_i = g^{-1}(X'_i\hat{\beta}) = \frac{1}{1+ e^{-X'_i\hat{\beta}}}.
$$
Given $n$ observations $Y_1, \ldots , Y_n$, our goal is to predict the total 
number of successes  $Y_{n+1}$ in $r_{n+1}$ trials while accounting for 
model selection. We denote the predicted probability of success $\hat{p}^{n+1}_{M}$ and its value is given by 
$$
\hat{p}^{n+1}_{M} = \frac{1}{1+ e^{-X'^{n+1}_{M}\hat{\beta}_{M}}}.
$$ 


Recalling that 
$$
E(Y_{n+1}) = r_{n+1} \cdot g^{-1}(X'^{n+1}\beta) = r_{n+1} \cdot p_{n+1},
$$
the form of the post-model selection AN PI for a binomial random variable is
\begin{align}
&& PI(M) =  r_{n+1} \cdot \hat{p}^{n+1}_{M} \pm  \nonumber \\
&& z_{1-\alpha/2} \cdot r_{n+1} \sqrt{ \frac{ e^{-2\hat{\eta}^{n+1}_{M}}}{\left(1+ e^{-\hat{\eta}^{n+1}_{M}}\right)^4}X'^{n+1}_{M} Var(\hat{\beta}_M)X^{n+1}_{M}+ \frac{1}{r^{n+1}}\hat{p}_{M}\left(1-\hat{p}_{M}\right) }
\label{PI_logistic}
\end{align}
where the factor $r_{n+1}$ in the width of the intervals comes from the the distribution in (\ref{delta_var}) being multiplied by this factor. The interval in (\ref{PI_logistic}) gives a prediction interval for total number of successes in $r_{n+1}$ trials. 

In the setting described above, our simulations are as follows. 
Let  $X \sim MVN_p(0,I_p)$ and 
$$
\beta = (\beta_1, \ldots, \beta_{25})' = (0.252, 0.171, -0.268, 0.09, 0, \ldots 0)'.
$$ 
Now  we calculate the estimated coverage using (\ref{coverage_sims}).  Again, we want 
coverage close to $0.95$ and narrow width. The simulated results are given in Table \ref{simulations_binomial_30}.

\begin{table}[h!]
\centering
\caption{{\textbf{Simulation results for binomial data with $r = 30$. }}}
\begin{tabular}[t]{|c|c|c|c|c|}
\hline
 $n$ &MSP & Interval & Coverage  & Avg Width (SE)  \\
\hline
50 &AIC & Naive &  .51 & 4.9  (1.35) \\
 & & AN &  0.83 & 9.51  (0.93) \\
 &  & S-AN   & 1 & 28.31 (2.22) \\
 & & boot & 1 & 20.82  (3.35)\\
 &       & $\hat{C}^{MC}$  &   0.97   & 15.55 (0.78) \\
 &BIC & Naive &  .48 &  4.03 (1.08) \\
  & & AN & 0.90  & 9.56  (0.76) \\
 &  & S-AN   &  1& 23.36 (3.93) \\
 & & boot &  1& 17.81  (3.06)\\
 &       & $\hat{C}^{MC}$  &   0.98   & 14.35 (0.74) \\
\hline
100 & AIC &  Naive &  .46 & 3.29  (0.86) \\
 &  &  AN &0.94  & 9.42 (0.87) \\
 &  & S-AN   & 0.99 &  14.11(2.22) \\
 & & boot & 0.99 &   11.99(1.66)\\
 &       & $\hat{C}^{MC}$  &   0.99   & 13.01 (0.61) \\
 &BIC & Naive &  .43 &  2.77 (0.69) \\
  & & AN & 0.95  &  9.38 (0.79) \\
 &  & S-AN   & 1 &  13.38 (2.00) \\
 & & boot & 0.99 &  11.57 (1.47)\\
 &       & $\hat{C}^{MC}$  &   0.99   & 12.67 (0.54) \\
\hline
200 &AIC & Naive &  .37  & 2.43 (0.90) \\
 & & AN &0.94   & 8.91 (1.60) \\
 &  & S-AN  &  0.97 &  10.08(2.10) \\
 & & boot & 0.97  &  9.37 (1.96) \\
 &       & $\hat{C}^{MC}$  &   0.94   & 9.15 (0.87) \\
 &BIC & Naive &  .37 & 2.18  (0.79) \\
  & & AN &  0.94 &  8.93 (1.56) \\
 &  & S-AN   &0.97  & 9.75 (1.92) \\
 & & boot & 0.98 &  9.30 (1.82)\\
 &       & $\hat{C}^{MC}$ &   0.95   & 9.01 (0.83) \\
\hline
\end{tabular}
\label{simulations_binomial_30}
\end{table}%

\begin{table}[h!]
\centering
\caption{{\textbf{Binomial cross validation results for the optimal choice for $\hat{C}^{MC}$.}}}
\begin{tabular}[t]{| c| c| c| c| c| }
\hline
n & MSP &$\hat{C}^{MC}$ \\
\hline
50 &  AIC & 5.00  \\
     &  BIC & 4.40  \\
\hline
100 &  AIC &  4.10 \\
       &  BIC &  3.90 \\
\hline
200 &  AIC &  3.00 \\
      &  BIC &  2.85 \\
\hline
\end{tabular}
\label{C_opt_binom}
\end{table}%


For $n=50$, the Naive interval has very poor coverage for both AIC and BIC. Using AIC and the AN interval results in undercoverage, but it is much better than the Naive PI. This is also true using BIC as the MSP. Both S-AN and boot are conservative, give coverage larger than the stated coverage. The width of both S-AN and boot make the intervals fairly uninformative despite having better coverage than Naive and AN. Finally, we observe $\hat{C}^{MC}$ performs noticeably better than the other PI's. This suggests that the cross validation step to widen the asymptotic normal PI is useful. 

Looking at the $n=100$ and case,  we see the Naive interval is worse than the smaller sample case. AN gives very good coverage and the smallest width among all of the PIs for both AIC and BIC.  The other 3 PIs, S-AN, boot, and $\hat{C}^{MC}$ give close stated coverage but they are slightly wider than the AN interval. 
When $n=200$ we see Naive is by far the worst among the 5 PIs, but the other 4 are roughly the same with AN and $\hat{C}^{MC}$ having perhaps slightly better coverage and narrower PIs than S-AN and boot. 

These results confirm two main points. First, the AN PI achieves the stated 95\% coverage as given in Theorem \ref{thm_AN_interval} when the sample size is large enough. Second $\hat{C}^{MC}$ always gives appropriate coverage, and appears to reduce to AN as $n$ increases. This leads us to recommend using $\hat{C}^{MC}$, especially for intermediate sample sizes, and use AN for large $n$.

Again, we list the optimal cross validation constants in Table \ref{C_opt_binom}. As in the Gaussian case, we see that $\hat{C}^{MC}$ decreases as the sample size increases. However, in the Binomial case, $\hat{C}^{MC}$  is noticeably larger than the Gaussian case. This may be due to the fact that we are using a normal PI for data that is not normal.

\section{Discussion}
\label{sec_discussion}

Our main contribution is the PI in Theorem \ref{thm_AN_interval}, and the 
small sample correction using $\hat{C}^{MC}$ given in \eqref{optimized}.
 Much of the literature on GLM prediction has focused on $\textit{confidence}$ intervals around predictors, which we refer to as the `Naive' PI, rather than true prediction intervals. That is, it is common for analysts to apply the inverse link function to the endpoints of a confidence interval in the linear predictor scale. This approach does not account for uncertainty appropriately because it uses the variability on the linear predictor scale rather than the data scale. Here we have presented prediction intervals that are derived on the model-scale rather than the linear-predictor scale. 

We have presented several prediction intervals that consider model uncertainty. The PI derived in Theorem \ref{thm_AN_interval} accounts for model uncertainty via the consistency of the MSP. This PI severely underperforms in terms of predictive coverage in small sample size, e.g. $n \approx p$, cases but as $n \rightarrow \infty$ the predictive coverage is roughy the nominal $1 -\alpha$ coverage. The boot and S-AN PIs tend to be too wide, suggesting far too much model uncertainty, for these PIs to be useful. These two PIs overcorrect the width of the intervals for the amount of the uncertainty in the model selection considered here. Again as $n$ increases, both boot and S-AN become usable 
(due to the MSP choosing the correct model). At this point, the bootstrapping is not necessary, 
however.  Since AN performs well with large samples, there is no need to bootstrap, we can directly use AN. 


Taken together, our results provide valid post model selection PIs for GLM's for moderate and large samples.


\begin{subappendices}

\subsection{Extension to GLMM's }
\label{sec_GLMM}


The approach described in Sec. \ref{sec_GLM} to obtain valid prediction intervals after model selection extends naturally to the class of generalized linear mixed models. Here we assume the random variable $Y\vert X,\beta,Z,U \sim {\mathcal{G}}$ where ${\mathcal{G}}$ is a distribution in the exponential family. 
We write the linear predictor as  
\begin{equation}\label{linear_predictor_GLMM}
\eta =  g(E(Y\vert U)) = g(\mu \vert  U) =  X\beta + ZU
\end{equation}
where $\beta$ is the vector of fixed effects, $U$ is a random effect such that $U \sim N(0, \Sigma_U)$.  $X$ and $Z$ are their respective design matrices.   
The mean function is 
$$
\mu = E(Y \vert  U )=g^{-1}(\eta) = g^{-1}(X\beta + ZU) 
$$
and the variance is 
$$
Var(Y\vert U) = V^{1/2}_{\mu} A V^{1/2}_{\mu}
$$
where $V^{1/2}_{\mu} = \hbox{diag}\left[\sqrt{V(\mu)}\right]$ and $A = \hbox{diag}\left[1/a(\phi)\right]$. 

As with GLM's, model selection is often performed when forming predictors. In the GLMM setting 
model selection can be done on both $X$ and $Z$, however here we focus on model selection 
on the design matrix $X$.
Analogous to the GLM case, we state an asymptotic normal predictive interval For GLMM's that is derived in the same way as the GLMM. The only difference is the random effects part of the linear predictor. However, recall the random effects have expectation 0, so the location of the asymptotic distribution does not change. The variance, on the other hand, does increase. This is seen in (\ref{GLMM_PI_AN} ) as the width has an extra term for the variance of the random effects:  We get that
$PI(M,C_{\alpha}) = g^{-1}(\hat{\eta}^{n+1}_M) \pm $
\begin{equation}\label{GLMM_PI_AN}
C_{\alpha} \sqrt{\left[ \frac{d}{d \eta} g^{-1}(\hat{\eta}^{n+1}_{M})\right]^2 \left(X'^{n+1}_M Var(\hat{\beta}_M)X^{n+1}_M + Z'^{n+1}Var(\hat{u})Z^{n+1}\right) + a(\hat{\phi})V(\hat{\mu})}.
\end{equation}


\subsection{GLMM bootstrap intervals}

As with the GLM AN interval, we can approximate the variance of $g^{-1}(\hat{\eta}^{n+1}_M)$ using bootstrapping and replace  (\ref{GLMM_PI_AN}) with 
\begin{equation}\label{GLMM_PI_AN_boot}
PI(M,C_{\alpha}) = g^{-1}(\hat{\eta}^{n+1}_M) \pm z_{1-\alpha/2} \sqrt{\hat{Var}(g^{-1}(\hat{\eta}^{n+1}_M))^{boot}+ a(\hat{\phi})V(\hat{\mu})}
\end{equation}
where $\hat{Var}(g^{-1}(\hat{\eta}^{n+1}_M))^{boot}$ is simply the variance of the bootstrapped distribution of $g^{-1}(\hat{\eta}^{n+1}_M)$.

We can also use the bootstrap approach, in same way as with the GLM, to obtain a bootstrap distribution for a new outcome. In the GLMM setting, we bootstrap the expected value of the distribution for a new outcome,
$$
\hat{\mu}_M = g^{-1}(X'^{n+1}_M\hat{\beta} + Z'^{n+1}\hat{u}).
$$
We proceed as follows:
\begin{itemize}
\item obtain $B$ bootstrap replications of $\hat{\mu}_M$, denoted $\mu^*_{1}, \ldots , \mu^*_{B}$,
\item obtain $B$ bootstrap replications of $a(\hat{\phi})_M$, denoted $a(\phi)^*_{1}, \ldots , a(\phi)^*_{B}$,
\item generate $y^*_{1}(\mu^*_{1},a(\phi)^*_{1}), \ldots, y^*_{B}(\mu^*_{B},a(\phi)^*_{B})$, from $\mathcal{G}$.
\end{itemize}
The sample $y^*_1, \ldots, y^*_B$ is used to obtain the predictive interval by extracting the appropriate percentile interval from this distribution. Thus the PI is
\begin{equation}\label{GLMM_PI_boot_new}
[q^*_{1-\alpha/2}, q^*_{\alpha/2}],
\end{equation}
the $1-\alpha/2$ and $\alpha/2$ quantiles from $y^*_1, \ldots, y^*_B$  which inherits the uncertainty of $M$, $\hat{\beta}$ and $\hat{u}$. 
These intervals are implementable assuming we already have estimates $\hat{\beta}$ and $\hat{u}$. 
Regardless of which method is used to form predictors, we theoretically can use 
both intervals (\ref{GLMM_PI_AN_boot}) or (\ref{GLMM_PI_boot_new}) because 
predictors and mean and variance functions, as well as the uncertainty associated with the MSP 
can be obtained through the bootstrap procedure. Thus, a closed form solution of parameter estimates is 
not necessary to obtain valid PI's. 
Also, we can, at least theoretically, still use 
both of the intervals presented to account for the uncertainty of 
model selection in the random effects design matrix.

\subsection{Computational issues for GLMM's}
\label{comp_issues_GLMM}

The theoretical and bootstrap based intervals we havew proposed
to capture the uncertainty of model selection are not implementable, at least yet. This is due to 
convergence issues with implementing GLMM's. 
Estimation in GLMM's requires integrating out the random effects, and these integrals do 
not have closed form solutions. Thus, numerical integration is necessary, making the integrals computationally hard.  

In practice, now, there is no a single best approach so one tries many approaches until the algorithm converges. 
Once convergence is achieved, classical approaches to assess model fit are used. In the 
bootstrapping approach, we require estimation over many repeated samples of the data and this 
would require convergence of the estimates in the GLMM over each resample. The estimates 
require numerical integration for each resample of the data which requires a person trying several 
algorithms until one works. We attempted this, but we were unsuccessful because 
convergence in each resample using a fixed numerical integration method is not feasible.


\end{subappendices}

\backmatter


\bmhead{Acknowledgments}

The first author acknowledges funding from the University of Nebraska Program of Excellence in Computational Science.

\bibliography{references}

\begin{thebibliography}{11}
\providecommand{\natexlab}[1]{#1}
\providecommand{\url}[1]{{#1}}
\providecommand{\urlprefix}{URL }
\providecommand{\doi}[1]{\url{https://doi.org/#1}}
\providecommand{\eprint}[2][]{\url{#2}}
 \bibcommenthead

\bibitem[{Bachoc et~al(1988)Bachoc, Leeb, and P{\"o}tscher}]{bethel:etal:1988}
Bachoc F, Leeb H, P{\"o}tscher B (1988) Asymptotic properties of information
  theoretic methods of model selection. University of California at Davis,
  Division of Statistics Technical Report

\bibitem[{Bachoc et~al(2019)Bachoc, Leeb, and P{\"o}tscher}]{Bachoc:etal:2019}
Bachoc F, Leeb H, P{\"o}tscher B (2019) Valid confidence intervals for
  post-model-selection predictors. Annals of Statistics 47:1475--1504

\bibitem[{Berk et~al(2013)Berk, Brown, A., Zhang, and Zhao}]{Berk:etal:2013}
Berk R, Brown L, A. B, et~al (2013) Valid post-selection inference. Annals of
  Statistics 41:802--837

\bibitem[{Efron(2014)}]{Efron:2014}
Efron B (2014) Estimation and accuracy after model selection. Journal of the
  American Statistical Association 109:991--1022

\bibitem[{Hong et~al(2018{\natexlab{a}})Hong, Kuffner, and
  Martin}]{Hong:etal:2018}
Hong L, Kuffner T, Martin R (2018{\natexlab{a}}) On overfitting and
  post-selection uncertainty assessments. Biometrika 105:221--224

\bibitem[{Hong et~al(2018{\natexlab{b}})Hong, Kuffner, and
  Martin}]{Hong:etal:2018_2}
Hong L, Kuffner T, Martin R (2018{\natexlab{b}}) On prediction of future
  insurance claims when the model is uncertain. Preprint

\bibitem[{Kuchibhotla et~al(2020)Kuchibhotla, Brown, Buja, J., I., and
  Zhao}]{Kuchibotla:etal:2020}
Kuchibhotla AK, Brown LD, Buja A, et~al (2020) A model free perspective for
  linear regression: Uniform-in-model bounds for post selection inference.
  Annals of Statistics 49:2953--2981

\bibitem[{Leeb(2009)}]{Leeb:2009}
Leeb H (2009) Conditional predictive inference post model selection. Annals of
  Statistics 37:2838--2876

\bibitem[{Leeb et~al(2015)Leeb, P{\"o}tscher, and Ewald}]{Leeb:etal:2015}
Leeb H, P{\"o}tscher B, Ewald K (2015) On various confidence intervals
  post-model-selection. Statistical Science 30:216--227

\bibitem[{Stine(1985)}]{stine:1985}
Stine R (1985) Bootstrap prediction intervals for regression. Journal of the
  American Statistical Association 80:1026--1031

\bibitem[{Stine(2021)}]{olive:2021}
Stine R (2021) Prediction intervals for glms, gams, and some survival
  regression models. Communications in Statistics - Theory and Methods

\end{thebibliography}


\end{document}